\newcommand{\ifndef}[2]{\@ifundefined{#1}{#2}{}}
\newcommand{\mydef}[2]{\def#1{#2}}
\newcommand{\nospell}[1]{#1}  %
\newcommand{\myusepackage}[2][]{\@ifpackageloaded{#2}{} %
{\ifthenelse{\equal{}{#1}} {\usepackage{#2}} {\usepackage[#1]{#2}} }}
\DeclareTextSymbolDefault{\CYRYAT}{OT2}
\DeclareTextSymbolDefault{\cyryat}{OT2}
\DeclareTextSymbolDefault{\CYRFITA}{OT2}
\DeclareTextSymbolDefault{\cyrfita}{OT2}
\DeclareTextSymbolDefault{\CYRIZH}{OT2}
\DeclareTextSymbolDefault{\cyrizh}{OT2}
\let\f\relax
\newcommand{\dr}{\nicefrac}
\newcommand{\dgCapDefinition}{Definition}
\newcommand{\dgCapDefinitions}{Definitions}
\newcommand{\dgCapPostulate}{Postulate}
\newcommand{\dgCapPostulates}{Postulates}
\newcommand{\dgCapExample}{Example}
\newcommand{\dgCapFact}{Fact}
\newcommand{\dgCapFacts}{Facts}
\newcommand{\dgCapQuestion}{Question}
\newcommand{\dgCapQuestions}{Questions}
\newcommand{\dgCapLemma}{Lemma}
\newcommand{\dgCapLemmas}{Lemmas}
\newcommand{\dgCapCorollary}{Corollary}
\newcommand{\dgCapCorollaries}{Corollaries}
\newcommand{\dgCapProposition}{Proposition}
\newcommand{\dgCapPropositions}{Propositions}
\newcommand{\dgCapClaim}{Claim}
\newcommand{\dgCapClaims}{Claims}
\newcommand{\dgCapTheorem}{Theorem}
\newcommand{\dgCapTheorems}{Theorems}
\newcommand{\dgCapProblem}{Problem}
\newcommand{\dgCapProblems}{Problems}
\newcommand{\dgCapRemark}{Remark}
\newcommand{\dgCapRemarks}{Remarks}
\newcommand{\dgCapConjecture}{Conjecture}
\newcommand{\dgCapConjectures}{Conjectures}
\newcommand{\dgCapResult}{Result}
\newcommand{\dgCapChapter}{Chapter}
\newcommand{\dgCapChapters}{Chapters}
\newcommand{\dgCapSection}{Section}
\newcommand{\dgCapSections}{Sections}
\newcommand{\dgCapSubsection}{Subsection}
\newcommand{\dgCapSubsections}{Subsections}
\newcommand{\dgCapFigure}{Figure}
\newcommand{\dgCapFigures}{Figures}
\newcommand{\dgCapEquation}{Equation}
\newcommand{\dgCapEquations}{Equations}
\newcommand{\dgCapExpression}{Expression}
\newcommand{\dgCapExpressions}{Expressions}
\newcommand{\dgCapInequality}{Inequality}
\newcommand{\dgCapInequalities}{Inequalities}
\newcommand{\dgProofOf}{\proofname\ of}
{}
\newcommand{\dgDefinition}{Definition}
\newcommand{\dgDefinitions}{Definitions}
\newcommand{\dgPostulate}{Postulate}
\newcommand{\dgPostulates}{Postulates}
\newcommand{\dgFact}{Fact}
\newcommand{\dgFacts}{Facts}
\newcommand{\dgQuestion}{Question}
\newcommand{\dgQuestions}{Questions}
\newcommand{\dgLemma}{Lemma}
\newcommand{\dgLemmas}{Lemmas}
\newcommand{\dgCorollary}{Corollary}
\newcommand{\dgCorollaries}{Corollaries}
\newcommand{\dgProposition}{Proposition}
\newcommand{\dgPropositions}{Propositions}
\newcommand{\dgClaim}{Claim}
\newcommand{\dgClaims}{Claims}
\newcommand{\dgTheorem}{Theorem}
\newcommand{\dgTheorems}{Theorems}
\newcommand{\dgProblem}{Problem}
\newcommand{\dgProblems}{Problems}
\newcommand{\dgRemark}{Remark}
\newcommand{\dgRemarks}{Remarks}
\newcommand{\dgConjecture}{Conjecture}
\newcommand{\dgConjectures}{Conjectures}
\newcommand{\dgChapter}{Chapter}
\newcommand{\dgChapters}{Chapters}
\newcommand{\dgSection}{Section}
\newcommand{\dgSections}{Sections}
\newcommand{\dgSubsection}{Subsection}
\newcommand{\dgSubsections}{Subsections}
\newcommand{\dgFigure}{Figure}
\newcommand{\dgFigures}{Figures}
\newcommand{\dgEquation}{Equation}
\newcommand{\dgEquations}{Equations}
\newcommand{\dgExpression}{Expression}
\newcommand{\dgExpressions}{Expressions}
\newcommand{\dgInequality}{Inequality}
\newcommand{\dgInequalities}{Inequalities}
{}
{}
\newtheoremstyle{mydefinition}  %
{\topsep}{\topsep}  %
{\slshape}  %
{}  %
{\bfseries}  %
{.}  %
{ }  %
{}  %
\newtheoremstyle{myremark}  %
{\topsep}{\topsep}  %
{\slshape}  %
{}  %
{\bfseries\itshape}  %
{.}  %
{ }  %
{\thmname{#1}\thmnumber{ \!#2}}  %
\newtheoremstyle{myexample}  %
{\topsep}{\topsep}  %
{\itshape}  %
{}  %
{\slshape}  %
{.}  %
{ }  %
{\ul{\thmname{#1}}}  %
\newtheoremstyle{myclaims}  %
{\topsep}{\topsep}  %
{\slshape}  %
{}  %
{\bfseries\slshape}  %
{.}  %
{ }  %
{\thmname{#1}\thmnumber{ \!#2}\ifthenelse{\equal{}{#3}}%
{}{\textnormal{ \!(#3)}}}  %
\theoremstyle{myremark}
\newtheorem*{myremark*}{\dgCapRemark}
\theoremstyle{mydefinition}
\theoremstyle{myexample}
\theoremstyle{myclaims}
\newtheorem*{my_claim*}{\dgCapClaim}
\newtheoremstyle{anystatementst}  %
{\topsep}{\topsep}  %
{\itshape}  %
{}  %
{\bfseries}  %
{.}  %
{ }  %
{#3}  %
\theoremstyle{anystatementst} }
\newcommand{\MyUniPat}{lsdfgkhjvrkjlhmisdlcjn}
\newcommand{\newident}[3][\MyUniPat]{\ifthenelse{\equal{\MyUniPat}{#1}}
{
\newcommand{#2}[1][]{\Ensuremath{\mathit{#3##1}}}
}
{\ifthenelse{\equal{}{#1}}
{
\newcommand{#2}[1][]{\Ensuremath{\mathit{#3}}}
}
{
\newcommand{#2}[1][\MyUniPat]{\ifthenelse{\equal{\MyUniPat}{##1}}%
{\Ensuremath{\mathit{#1}}}%
{\Ensuremath{\mathit{#3}}}}
}
}
}
\newcommand{\newidenT}[3][\MyUniPat]{\ifthenelse{\equal{\MyUniPat}{#1}}
{
\newcommand{#2}[1][\MyUniPat]{\ifthenelse{\equal{\MyUniPat}{##1}}%
{\il{#3}}%
{\Ensuremath{\mathit{#3##1}}}}
}
{
\newcommand{#2}[1][\MyUniPat]{\ifthenelse{\equal{\MyUniPat}{##1}}%
{\il{#1}}%
{\Ensuremath{\mathit{#3}}}}
}
}
\newcommand{\newmat}[3][\MyUniPat]{\ifthenelse{\equal{\MyUniPat}{#1}}%
{\newcommand{#2}[1][]{\Ensuremath{#3##1}}}%
{\newcommand{#2}[1][]{\Ensuremath{#3}}}%
}
\newcommand{\providemat}[3][\MyUniPat]{\ifthenelse{\equal{\MyUniPat}{#1}}
{\providecommand{#2}[1][]{\Ensuremath{#3##1}}}
{\providecommand{#2}[1][]{\Ensuremath{#3}}}  %
}
\newcommand{\newmatop}[3][\MyUniPat]{\ifthenelse{\equal{\MyUniPat}{#1}}
{
\mydef{#2}{\operatorname{#3}}
}
{
\newcommand{#2}[1][\MyUniPat]{\ifthenelse{\equal{\MyUniPat}{##1}}%
{\operatorname{#1}}%
{\operatorname{#3}}}
}
}
\newcommand{\newfunction}[2]{%
\newcommand{#1}[2][\MyUniPat]{\ifthenelse{\equal{\MyUniPat}{##1}}%
{\Ensuremath{#2\lf(##2\rt)}}%
{#2(##2)}}%
}
\newcommand{\MyMakeTheoMacros}[3]{
\expandafter\newcommand\csname\expandafter\@gobble\string#2NostarNoname@DGaux\endcsname[2][]
{\ifthenelse{\equal{}{##1}}%
{\begin{#1}~##2 \end{#1}}%
{\begin{#1}\label{##1}~##2\end{#1}}%
}
\expandafter\newcommand\csname\expandafter\@gobble\string#2StarNoname@DGaux\endcsname[1]
{\begin{#1*}~##1 \end{#1*}}
\def#2{\expandafter\@ifstar%
\expandafter{\csname\expandafter\@gobble\string#2StarNoname@DGaux\endcsname}%
{\csname\expandafter\@gobble\string#2NostarNoname@DGaux\endcsname}%
}

\expandafter\newcommand\csname\expandafter\@gobble\string#2NostarName@DGaux\endcsname[3][]
{\ifthenelse{\equal{}{##1}}%
{\begin{#1}[\e{##2}]~##3 \end{#1}}%
{\begin{#1}[\e{##2}]\label{##1}~##3\end{#1}}%
}
\expandafter\newcommand\csname\expandafter\@gobble\string#2StarName@DGaux\endcsname[2]
{\begin{#1*}[\e{##1}]~##2 \end{#1*}}
\def#3{\expandafter\@ifstar%
\expandafter{\csname\expandafter\@gobble\string#2StarName@DGaux\endcsname}
{\csname\expandafter\@gobble\string#2NostarName@DGaux\endcsname}%
}
}
\newtheorem*{rep@theorem}{\rep@title}
\newcommand{\newreptheorem}[2]{%
\newenvironment{rep#1}[1]{%
\def\rep@title{#2 \ref{##1}}%
\begin{rep@theorem}}%
{\end{rep@theorem}}}
\newcommand{\MyMakeDupTheoMacros}[7]{
\MyMakeTheoMacros{#1}{#2}{#3}
\newreptheorem{#1}{#6}
\newcommand{#4}[3]{
\newcommand{##2}{##3}
\begin{#1}\label{##1}~##2\end{#1}}
\newcommand{#5}[4]{
\newcommand{##2}{##4}
\begin{#1}{\e{##3}}\label{##1}~##2\end{#1}}
\newcommand{#7}[2]{\begin{rep#1}{##1}~##2 \end{rep#1}}
}
\newcommand{\MyMakeRefMacros}[3]{\newcommand{#1}[2][]
{\ifthenelse{\equal{}{##1}}{#2~\ref{##2}}{#3~\ref{##1} and~\ref{##2}}}}
\newcommand{\MyMakeEqRefMacros}[3]{\newcommand{#1}[2][]
{\ifthenelse{\equal{}{##1}}{#2~\eqref{##2}}{#3~\eqref{##1} and~\eqref{##2}}}}
{}
\newcommand{\bibentry}[8]{
{}\bibitem[\nospell{#8}]{#1} {\textup #3}.{}
\ifthenelse{\equal{}{#6}}
{\newblock \textrm{#4.} \newblock {\em #5}, #7....}
{\newblock \textrm{#4.} \newblock {\em #5, #6}, #7.}
}
\MyMakeRefMacros{\fctref}{\dgFact}{\dgFacts}
\MyMakeRefMacros{\Fctref}{\dgCapFact}{\dgCapFacts}
\MyMakeRefMacros{\questref}{\dgQuestion}{\dgQuestions}
\MyMakeRefMacros{\Questref}{\dgCapQuestion}{\dgCapQuestions}
\MyMakeRefMacros{\lemref}{\dgLemma}{\dgLemmas}
\MyMakeRefMacros{\Lemref}{\dgCapLemma}{\dgCapLemmas}
\MyMakeRefMacros{\crlref}{\dgCorollary}{\dgCorollaries}
\MyMakeRefMacros{\Crlref}{\dgCapCorollary}{\dgCapCorollaries}
\newtheorem*{prp*}{\e{\dgCapProposition}}
\MyMakeRefMacros{\prpref}{\dgProposition}{\dgPropositions}
\MyMakeRefMacros{\Prpref}{\dgCapProposition}{\dgCapPropositions}
\MyMakeRefMacros{\clmref}{\dgClaim}{\dgClaims}
\MyMakeRefMacros{\Clmref}{\dgCapClaim}{\dgCapClaims}
\MyMakeRefMacros{\theoref}{\dgTheorem}{\dgTheorems}
\MyMakeRefMacros{\Theoref}{\dgCapTheorem}{\dgCapTheorems}
\MyMakeRefMacros{\posturef}{\dgPostulate}{\dgPostulates}
\MyMakeRefMacros{\Posturef}{\dgCapPostulate}{\dgCapPostulates}
\MyMakeRefMacros{\defiref}{\dgDefinition}{\dgDefinitions}
\MyMakeRefMacros{\Defiref}{\dgCapDefinition}{\dgCapDefinitions}
\MyMakeRefMacros{\probref}{\dgProblem}{\dgProblems}
\MyMakeRefMacros{\Probref}{\dgCapProblem}{\dgCapProblems}
\MyMakeRefMacros{\remref}{\dgRemark}{\dgRemarks}
\MyMakeRefMacros{\Remref}{\dgCapRemark}{\dgCapRemarks}
\MyMakeRefMacros{\conjref}{\dgConjecture}{\dgConjectures}
\MyMakeRefMacros{\Conjref}{\dgCapConjecture}{\dgCapConjectures}
\renewcommand{\qedsymbol}{$\blacksquare$}
\newcommand{\prfstart}[1][]{\ifthenelse{\equal{}{#1}}%
{\begin{proof}\renewcommand{\qedsymbol}{$\blacksquare$}}%
{\begin{proof}[\dgProofOf\ #1]%
\renewcommand{\qedsymbol}{$\blacksquare_{\mbox{\it{\scriptsize{#1}}}}$}}%
}
\newcommand{\prfend}[1][*]{%
\ifthenelse{\equal{}{#1}}{\renewcommand{\qedsymbol}{$\blacksquare$}}{}%
\ifthenelse{\equal{*}{#1}}{}%
{\renewcommand{\qedsymbol}{$\blacksquare_{\mbox{\it{\scriptsize{#1}}}}$}}%
\end{proof}\renewcommand{\qedsymbol}{$\blacksquare$}%
}
\newcommand{\sect}[2][]{
\ifthenelse{\equal{*}{#2}}
{\section*}
{\ifthenelse{\equal{}{#1}}
{\section{#2}}
{\section{#2}\label{#1}}
}
}
\newcommand{\para}[2][]{\ifthenelse{\equal{}{#1}}
{\paragraph{#2}}
{\paragraph{#2}\label{#1}}}
\MyMakeRefMacros{\chref}{\dgChapter}{\dgChapters}
\MyMakeRefMacros{\Chref}{\dgCapChapter}{\dgCapChapters}
\MyMakeRefMacros{\sref}{\dgSection}{\dgSections}
\MyMakeRefMacros{\Sref}{\dgCapSection}{\dgCapSections}
\MyMakeRefMacros{\ssref}{\dgSubsection}{\dgSubsections}
\MyMakeRefMacros{\Ssref}{\dgCapSubsection}{\dgCapSubsections}
\MyMakeRefMacros{\sssref}{\dgSubsection}{\dgSubsections}
\MyMakeRefMacros{\Sssref}{\dgCapSubsection}{\dgCapSubsections}
\MyMakeRefMacros{\figref}{\dgFigure}{\dgFigures}
\MyMakeRefMacros{\Figref}{\dgCapFigure}{\dgCapFigures}
\newcommand{\IfMathMode}[2]{\ifmmode{#1}\else{#2}\fi}
\newcommand{\Ensuremath}{\ensuremath}
\newcommand{\fbr}[1]{\IfMathMode%
{#1}{$#1$}}                     %
\newcommand{\fnbr}[1]{\mbox{\fbr{#1}}}  %
\newcommand{\fla}[2][*]{\ifthenelse{\equal{}{#1}}{\fbr{#2}}{\fnbr{#2}}}
\newcommand{\f}{\fla}
\newcommand{\malabel}[1]{\addtocounter{equation}{1}\tag{\theequation}\label{#1}}
\newcommand{\mal}[2][]{\MyChangeMathMargins%
\ifthenelse{\equal{}{#1}}%
{\begin{align*} #2 \end{align*}}%
{\ifthenelse{\equal{P}{#1}}%
{\allowdisplaybreaks\begin{align*} #2%
\end{align*}\interdisplaylinepenalty=10000}%
{\begin{align*} \malabel{#1} #2 \end{align*}}%
}%
}
\newcommand{\m}{\mal}
\newcommand{\mac}{\substack}
\MyMakeEqRefMacros{\equref}{\dgEquation}{\dgEquations}
\MyMakeEqRefMacros{\Equref}{\dgCapEquation}{\dgCapEquations}
\MyMakeEqRefMacros{\expref}{\dgExpression}{\dgExpressions}
\MyMakeEqRefMacros{\Expref}{\dgCapExpression}{\dgCapExpressions}
\MyMakeEqRefMacros{\inequref}{\dgInequality}{\dgInequalities}
\MyMakeEqRefMacros{\Inequref}{\dgCapInequality}{\dgCapInequalities}
\newcommand{\bref}[1]{(\ref{#1})}
\newcommand{\lf}{\mathopen{}\mathclose\bgroup\left}
\newcommand{\rt}{\aftergroup\egroup\right}
\providecommand{\middle}{\big}
\newcommand{\md}{\middle}
\def\moverlay{\mathpalette\mov@rlay}
\def\mov@rlay#1#2{\leavevmode\vtop{%
\baselineskip\z@skip \lineskiplimit-\maxdimen
\ialign{\hfil$\m@th#1##$\hfil\cr#2\crcr}}}
\newcommand{\charfusion}[3][\mathord]{
#1{\ifx#1\mathop\vphantom{#2}\fi
\mathpalette\mov@rlay{#2\cr#3}
}
\ifx#1\mathop\expandafter\displaylimits\fi}
\providecommand{\cupdot}{\charfusion[\mathbin]{\cup}{\cdot}}
\providecommand{\bigcupdot}{\charfusion[\mathop]{\bigcup}{\cdot}}
\newcommand{\h}[2][]{\ifthenelse{\equal{}{#2}}%
{\mathop H_{#1}}%
{\mathop H_{#1}{\l({#2}\r)}}}
\newcommand{\hh}[3][]{\mathop H_{#1}%
{\l({#2}\vphantom{|_1^1}\md|\vphantom{|_1^1}{#3}\r)}}
\newcommand{\hm}[2][]{\ifthenelse{\equal{}{#2}}%
{\mathop {H_{\txt{min}}}_{#1}}%
{\mathop {H_{\txt{min}}}_{#1}{\l({#2}\r)}}}
\newcommand{\KL}[2]{d_{KL}\lf({#1}\md\|\vphantom{|_1^1}{#2}\rt)}
\providecommand{\E}[2][]{\mathop{\pmb{E}}_{#1}\lf[{#2}\rt]}
\newcommand{\PR}[2][]{\mathop{\pmb{Pr}}_{#1}\lf[{#2}\rt]}
\newcommand{\PRr}[3][]{\mathop{\pmb{Pr}}_{#1}\lf[{#2}\vphantom{|_1^1}\md|\vphantom{|_1^1}{#3}\rt]}
\providecommand{\U}{}  %
\renewcommand{\U}[1][]{\ifthenelse{\equal{}{#1}}%
{{\cal U}}%
{{\cal U}_{#1}}}
\newcommand{\GF}[2][]{{\mathcal GF_{#2}^{#1}}}
\providemat{\QQ}{\mathbb{Q}}
\providemat{\NN}{\mathbb{N}}
\providemat{\CC}{\mathbb{C}}
\providemat{\RR}{\mathbb{R}}
\providemat{\ZZ}{\mathbb{Z}}
\newcommand{\ord}[1][]{\nospell{\ifthenelse{\equal{}{#1}}%
{\txt{'th}}%
{\ifthenelse{\equal{1}{#1}}{$1\txt{'st}$}{\ifthenelse{\equal{2}{#1}}{$2\txt{'nd}$}{\ifthenelse{\equal{3}{#1}}{$3\txt{'rd}$}{\fla{#1\txt{'th}}}}}}}}
\newcommand{\fr}[3][*]{%
\ifthenelse{\equal{*}{#1}}%
{\frac{#2}{#3}}{}%
\ifthenelse{\equal{/}{#1}}%
{\dr{#2}{#3}}{}%
\ifthenelse{\equal{}{#1}}%
{\lf.#2\md/#3\rt.}{}%
\ifthenelse{\equal{p_}{#1}}%
{\lf.\lf(#2\rt)\md/#3\rt.}{}%
\ifthenelse{\equal{_p}{#1}}%
{\lf.#2\md/\lf(#3\rt)\rt.}{}%
\ifthenelse{\equal{pp}{#1}}%
{\lf.\lf(#2\rt)\md/\lf(#3\rt)\rt.}{}%
}
\newcommand{\sq}{\sqrt}
\newcommand{\set}[2][]{\ifthenelse{\equal{}{#1}}%
{\Ensuremath{\lf\{#2\rt\}}}%
{\Ensuremath{\lf\{#2\vphantom{|_1^1}\md|\vphantom{|_1^1}#1\rt\}}}}
\newcommand{\sett}[2]{\Ensuremath{\lf\{#1\vphantom{|_1^1}\md|\vphantom{|_1^1}#2\rt\}}}
\newcommand{\Log}[2][]{\ifthenelse{\equal{}{#1}}%
{\log\lf(#2\rt)}%
{\log_{#1}\lf(#2\rt)}%
}
\newcommand{\Min}[2][]{\ifthenelse{\equal{}{#1}}%
{\Ensuremath{\min\lf\{#2\rt\}}}%
{\Ensuremath{\min\lf\{#2\vphantom{|_1^1}\md|\vphantom{|_1^1}#1\rt\}}}}
\newcommand{\Inf}[2][]{\ifthenelse{\equal{}{#1}}%
{\Ensuremath{\inf\lf\{#2\rt\}}}%
{\Ensuremath{\inf\lf\{#2\vphantom{|_1^1}\md|\vphantom{|_1^1}#1\rt\}}}}
\newfunction{\asO}{O}
\newfunction{\asOm}{\Omega}
\newcommand{\sz}[2][]{\ifthenelse{\equal{}{#1}}%
{\Ensuremath{\lf|#2\rt|}}%
{\Ensuremath{\lf|#2\rt|_{#1}}}}
\providecommand{\norm}[2][]{\ifthenelse{\equal{}{#1}}%
{\Ensuremath{\lf\|#2\rt\|}}%
{\Ensuremath{\lf\|#2\rt\|_{#1}}}}
\newcommand{\txt}[1]{\textrm{#1}}  %
\newcommand{\Cl}{\mathcal}  %
\DeclareMathAlphabet{\mathlowcal}{OT1}{pzc}{m}{it}
\newidenT{\Pp}{P}
\newidenT{\ZPP}{ZPP}
\newidenT{\SBP}{SBP}
\newidenT{\coSBP}{coSBP}
\newidenT{\PP}{PP}
\newidenT{\UPP}{UPP}
\newidenT{\coRP}{coRP}
\newidenT{\BQP}{BQP}
\newidenT{\NP}{NP}
\newidenT{\coNP}{coNP}
\newidenT{\AM}{AM}
\newidenT{\PH}{PH}
\newidenT{\PSPACE}{PSPACE}
\newidenT{\EXP}{EXP}
\newidenT{\NEXP}{NEXP}
\newidenT{\DNF}{DNF}
\newidenT{\Eq}{Eq}
\newidenT{\Disj}{Disj}
\newidenT{\IP}{IP}
\newmat{\mset}{\smin\set}
\newcommand{\nin}{\not\in}  %
\newcommand{\Then}{\Longrightarrow}
\newcommand{\tm}{\cdot}
\newcommand{\smin}{\setminus}
\newcommand{\eps}{\varepsilon}
\newcommand{\deq}{\stackrel{\textrm{def}}{=}}
\newcommand{\ds}[1][]
{\ifthenelse{\equal{}{#1}}{\allowbreak\dots}{#1\allowbreak\dots#1}}
\newmat{\dc}{\ds[,]}
\newmat{\dcirc}{\ds[\circ]}
\mathchardef\myhyphen="2D
\newcommand{\abstart}{\begin{abstract}}
\newcommand{\abend}{\end{abstract}}
\protected \def \dg #1{%
\textcolor{Red}
{
{\normalmarginpar\marginnote{\bl{DG's comment}}}
{\reversemarginpar\marginnote{\bl{DG's comment}}\\}
\IfMathMode{
~~~\txt{#1}~
}{
~\\~~~#1~\\
{\normalmarginpar\marginnote{\bl{\ul{------}}}}
{\reversemarginpar\marginnote{\bl{\ul{------}}}\\}
}
}
\ClassWarning{My Macros}{#1}
}
\newcommand{\fn}[2][]{%
\IfMathMode{}{}%
\ifthenelse{\equal{}{#1}}%
{\footnote{#2}}%
{\footnote{\label{#1}#2}}%
}
\DeclareTextFontCommand{\bemph}{\bfseries}
\DeclareTextFontCommand{\ibemph}{\bfseries\em}
{} %
\newcommand{\e}{\emph}
\newcommand{\bl}[1]{{\bf #1}} %
\newcommand{\il}[1]{{\it #1}} %
\providecommand{\ul}[1]{\underline{#1}} %
\newcommand{\tbb}{\qquad}
\newcommand{\tbbb}{\qquad\qquad}
\newcommand{\MyChangeMathMargins}{%
}
\newidenT{\SV}{SV}
\newident{\SVd}{SV_\delta}
\newident{\Hamd}{Ham_d}
\newident{\Hamdi}{Ham_d^i}
\title{
Santha-Vazirani sources, deterministic condensers\\
and very strong extractors}
\newcommand{\instDGPP}{Institute of Mathematics of the Czech Academy of Sciences, \v Zitna 25, Praha 1, Czech Republic.}
\newcommand{\thanksDGPP}{Partially funded by the grant 19-27871X of GA \v CR.}
\newcommand{\thanksDGonly}{Part of this work was done while visiting the Centre for Quantum Technologies at the National University of Singapore, and was partially supported by the Singapore National Research Foundation, the Prime Minister's Office and the Ministry of Education under the Research Centres of Excellence programme under grant R 710-000-012-135.}
\author{Dmitry Gavinsky\thanks{\instDGPP\newline\thanksDGPP} \thanks{\thanksDGonly}
\and Pavel Pudl\'ak\protect\footnotemark[1]
}
\begin{document}

\maketitle

\thispagestyle{empty}

\abstart
The notion of \e{semi-random sources}, also known as \e{Santha-Vazirani (\SV)} sources, stands for a sequence of $n$ bits, where the dependence of the \ord[i] bit on the \e{previous $i-1$ bits} is limited for every $i\in[n]$.
If the dependence of the \ord[i] bit on the \e{remaining $n-1$ bits} is limited, then this is a \e{strong \SV-source}.
Even the strong \SV-sources are known not to admit (universal) \e{deterministic extractors}, but they have \e{seeded extractors}, as their min-entropy is \asOm n.

It is intuitively obvious that strong \SV-sources are \e{more than just high-min-entropy sources}, and this work explores the intuition.
\e{Deterministic condensers} are known not to exist for general high-min-entropy sources, and we construct for any constants $\eps,\delta\in(0,1)$ a deterministic condenser that maps $n$ bits coming from a strong \SV-source with \e{bias} at most $\delta$ to \asOm n\ bits of \e{min-entropy rate} at least $1-\eps$.

In conclusion we observe that deterministic condensers are closely related to \e{very strong extractors} -- a proposed strengthening of the notion of \e{strong (seeded) extractors}:\ in particular, our constructions can be viewed as \e{very strong extractors for the family of strong Santha-Vazirani distributions}.
The notion of very strong extractors requires that the output remains unpredictable even to someone who knows not only the seed value (as in the case of strong extractors), but also the extractor's outputs corresponding to the same input value with each of the preceding seed values (say, under the lexicographic ordering).
Very strong extractors closely resemble the original notion of \SV-sources, except that the bits must satisfy the unpredictability requirement only on average.
\abend

\setcounter{page}{0}
\newpage

\sect[s_intro]{Introduction}

According to the principles of quantum mechanics, perfectly unbiased and independent random bits can be generated in a physical experiment; however, the imperfectness of practical implementations makes it impossible to deduce from the postulates of quantum mechanics perfect independence of the generated bit sequences.
Another possibility is using various chaotic system as a source of of random bits, but in this case it also remains unclear whether perfect (or arbitrarily close to such) independence can be claimed.

The natural question then is: can we reduce the bias and the dependence to a negligible minimum by post-processing the generated bits, coming from a non-perfect source?
The post-processing must, of course, be done by a deterministic algorithm.
The answer, surprisingly, is that this is not possible (at least in some models of the real situation).

In 1986 Santha and Vazirani~\cite{SV86_Gen} have defined and studied the notion of \e{semi-random sources}, now known as \SV-sources.
These are sequences of $n$ bits $X_1\dc X_n$, whose values cannot be accurately predicted in the following sense:\ for some $\delta\in[0,1)$ and any $z\in\01^{i-1}$ for $i\in[n]$, it holds that
\m{
\fr{1-\delta}2 \le \PRr{X_i=1}{X_1=z_1\dc X_{i-1}=z_{i-1}}
\le \fr{1+\delta}2
.}
If $\delta\in(0,1)$ is fixed, we will denote the class of such sources by $SV_\delta$.

In 2004 Reingold, Vadhan and Wigderson~\cite{RVW04_A_No} defined an even stronger class of entropic bits, which they called \e{strong Santha-Vazirani sources}, where for some $\delta\in[0,1)$ and any $z\in\01^n$:
\m{
\fr{1-\delta}2
&\le \PRr{X_i=1}{X_1=z_1\dc X_{i-1}=z_{i-1}, X_{i+1}=z_{i+1}, X_n=z_n}\\
&\le \fr{1+\delta}2
}
for any $i\in[n]$. Similarly, for a fixed $\delta\in(0,1)$, we will call such sources strong $SV_\delta$.

A \e{deterministic extractor} is a function that maps $n$ bits to (at least) $1$ bit that is nearly-unbiased, as long as the input bits are coming from the corresponding type of entropy source.
Santha and Vazirani demonstrated~\cite{SV86_Gen} that \SV-sources did not admit a (universal) deterministic extractor.
Later Reingold, Vadhan and Wigderson~\cite{RVW04_A_No} generalized this impossibility result to the case of strong \SV-sources.

It is well known, on the other hand, that \e{seeded extractors} (see Sect.~\ref{s_prelim}) exist for the class of bit sources whose min-entropy is \asOm n; since the \SV-sources, obviously, belong to that class,\fn
{
Throughout the work we will assume, unless stated otherwise, $\delta\in\asOm1$ in the context of \SV-sources.
}
seeded extractors exist, in particular, for them.

The definition of \SV-sources is very natural,
so it is both important and interesting to investigate this type of randomness.
Intuitively, it is obvious that \SV-sources -- especially the strong form -- %
are more structured than general sources with the same min-entropy and one should be able to use this fact. 
This work will explore this intuition.

\e{Deterministic condensers} are functions that map $n$ bits to $m~(<n)$ bits, such that  the \e{min-entropy rate} of the output is higher than the min-entropy rate of the input. (Min-entropy rate is the ratio of the min-entropy to the number of bits.) This is certainly not always possible, for instance, if the input has the maximum min-entropy. Typically we have a class of sources and a lower bound on their min-entropy rate and we want to achieve higher lower bound on the min-entropy of the output distributions.
Similarly to deterministic extractors, deterministic condensers do not exist for the class of \emph{all sources} whose min-entropy rate is at least a given number. 

In this paper we prove two results about SV and strong SV sources. First we show that for $SD_\delta$ sources, there are no non-trivial condensers, which means that in general we cannot improve the min-entropy rate using condensers. On the other hand, we construct for any constants $\eps,\delta\in(0,1)$, a deterministic condenser that maps $n$ bits coming from a \emph{strong} \SV$_\delta$ source to \asOm n\ bits of min-entropy rate at least $1-\eps$.

As deterministic condensers are somewhat exotic objects (primarily due to their non-existence for the general class of high-min-entropy sources), this work continues by investigating that notion.

The familiar notion of \e{strong (seeded) extractors} can be strengthened further -- we call the new type of distribution-transforming objects \e{very strong extractors} -- here the output must remain unpredictable even to someone who knows not only the seed value (as in the case of strong extractors), but also the extractor's outputs corresponding to the same input value with each of the preceding seed values (see Sect.~\ref{s_prelim}).
We show that deterministic condensers can be easily transformed into very strong extractors;\fn
{
The reverse transformation is almost possible:\ the resulting mapping can only guarantee high \e{entropy rate} of the output (see Sect.~\ref{s_conc}).
}
therefore, deterministic condensers are ``stronger'' objects than strong extractors, but ``weaker'' than deterministic extractors.
Via the same transformation, the main construction of this work gives a very strong extractor for the class of strong \SV-sources.

\sect[s_prelim]{Preliminaries}

For an excellent survey of error correcting codes, see~\cite{MS77_The_The}.

Let $[n]=\set{1\dc n}$ and $\ZZ_n=\ZZ/n\ZZ\simeq\set{0\dc n-1}$.
Let $\log$ be base-$2$ by default.

For $x\in\01^n$ and $i\in [n]$, we will use both $x_i$ and $x(i)$ to address the \ord[i] bit of $x$.
Let $\sz x$ denote the Hamming weight of $x$.
For $y\in\01^n$, let $x\+ y$ denote the bit-wise XOR of the two vectors.
For $y\in\01^m$, let $x\circ y\in\01^{n+m}$ denote the corresponding concatenation.

We will often implicitly assume the arithmetic of $\GF2$ and its generalization to $\GF[n]2$ (e.g., $x\+ y$ is the two vectors' sum in the \f n-dimensional linear space).
In the context of $\GF[n]2$ for $i\in[n]$ we will write $e_i$ to denote the \ord[i] unit vector in $\GF[n]2$ and let $e_0\deq\bar0\in\GF[n]2$.

For sets $A$ and $B$ we will write $A\cupdot B$ to denote the union while implying the sets' disjointness (the notation, especially the indexed version ``$\bigcupdot_{i=1}^n\ds$'', is a convenient way of addressing partitions).

For a non-empty finite set $A$ we will denote by $\U[A]$ the uniform distribution on $A$.
Let $\mu$ and $\nu$ be distributions on $A$, we will say that they are \e{\f\eps-close} if the $l_1$-distance between them is at most $2\eps$.

Let $X$ be a random variable, then
\m{
\hm\mu = \hm[X\sim\mu]X \deq \Min[a\in A]{\Log{\fr1{\mu(a)}}}
}
is the \e{min-entropy} of $\mu$ and
\m{
\h\mu = \h[X\sim\mu]X \deq \sum_{a\in A}\mu(a)\tm\Log{\fr1{\mu(a)}}
}
is the \e{entropy} of $\mu$.
If $A=\01^n$, then $\fr{\hm\mu}n$ is the \e{min-entropy rate} and $\fr{\h\mu}n$ is the \e{entropy rate} of $\mu$.

It is intuitively obvious that a distribution is close to the uniform on its support if and only if the entropy of the distribution is close to the maximum (the logarithm of its support size).
The following statement formalizes this intuition.

\fct[c_mu_U]{Let $\mu$ be a distribution supported on $A$.
Then
\m{
\fr{\log e}2\tm\norm[1]{\mu-\U[A]}^2
&~\le~ \Log{|A|}-\h\mu\\
&\hspace{-48pt}\le~ \fr12\tm\norm[1]{\mu-\U[A]}\tm\Log{|A|}
+ \sq{2\log e\tm\norm[1]{\mu-\U[A]}\tm\Log{|A|}}
.}
}

\prfstart
Let $d\deq\norm[1]{\mu-\U[A]}$.
By Pinsker's inequality, see~\cite{P60},
\m{
\fr{\log e}2\tm d^2
& \le \KL\mu{\U[A]}
= \sum_{a\in A}\mu(a)\tm\Log{\fr{\mu(a)}{\U[A](a)}}\\
& = \sum_{a\in A}\mu(a)\tm \l(\Log{|A|}-\Log{\fr1{\mu(a)}}\r)
= \Log{|A|}-\h\mu
,}
which establishes the desired lower bound on $\Log{|A|}-\h\mu$.

For every $t\ge0$ let
\m[m_B_t]{
B_t \deq \sett{x\in A}{\mu(x) > \fr{1+t}{|A|}}
,}
then
\m[m_U_B_t]{
\U[A](B_t) \le \fr{\mu(B_t)}{1+t}
&\Then~
\fr d2 \ge \mu(B_t)-\U[A](B_t) \ge \fr t{1+t}\tm\mu(B_t)\\
&\Then~
\mu(B_t) \le \fr d2+\fr d{2t}
.}
So,
\m{
\h\mu
&\ge \sum_{x\nin B_t}\mu(x)\tm\Log{\fr1{\mu(x)}}
\ge \mu(A\smin B_t)\tm\Log{\fr{|A|}{1+t}}\\
&\ge \Log{|A|}\tm\l(1-\fr d2-\fr d{2t}\r) - \Log{1+t}\\
&\ge \Log{|A|} - \fr{d\tm\Log{|A|}}2 - \fr{d\tm\Log{|A|}}{2t} - t\tm\log e
,}
where the second inequality is~\bref{m_B_t}, the third is~\bref{m_U_B_t}, and the last one holds, as $\Log{1+t}\le t\tm\log e$.
Choosing $t=\sq{\fr{d\tm\Log{|A|}}{2\log e}}$ establishes the desired upper bound on $\Log{|A|}-\h\mu$.
\prfend

The following are several families of \e{discrete distributions} that we will be interested in.

\ndefi[def_SV]{Santha-Vazirani sources}{Let $\delta\in[0,1)$ and $X=X_1\dc X_n$ be a random variable distributed over $\01^n$ according to a distribution $\mu$.
If
\m{
\forall i\in[n],\,z\in\01^{i-1}:\:
\PRr[\mu]{X_i=1}{\bigwedge_{j=1}^{i-1}X_j=z_j}\in\lf[\fr{1-\delta}2,\fr{1+\delta}2\rt]
,}
then we call $X$ a \e{Santha-Vazirani source with bias $\delta$ (\SVd)} and $\mu$ a \e{Santha-Vazirani distribution with bias $\delta$}.
If
\m{
\forall i\in[n],\,z\in\01^n:\:
\PRr[\mu]{X_i=1}{\bigwedge_{j\in[n]\mset i}X_j=z_j}\in\lf[\fr{1-\delta}2,\fr{1+\delta}2\rt]
,}
then we call $X$ a \e{strong Santha-Vazirani source with bias $\delta$} and $\mu$ a \e{strong Santha-Vazirani distribution with bias $\delta$}.
}

The following are several types of \e{distribution transformations} that we will be interested in.

\ndefi[def_cond]{Deterministic condensers}{Let $\Cl F$ be a family of distributions over $\01^n$.
A function $h:\01^n\to\01^m$ is a \e{deterministic $k$-condenser for $\Cl F$} if
\m{
\hm[X\sim\mu]{h(X)}\ge k
}
for every $\mu\in\Cl F$.

The \e{min-entropy rate} of the condenser is $\dr km$.
The condenser is \e{non-trivial} if
\m{
\fr km > \Inf[\mu\in\Cl F]{\fr{\hm{\mu}}n}
.}
}

In the concluding~\sref{s_conc} we will consider the \e{entropy rate} of a condenser, defined as
\m[m_en_ra]{
\fr{\Inf[\mu\in\Cl F]{\h[X\sim\mu]{h(X)}}}m
,}
which is, obviously, at least as high as the min-entropy rate of the same condenser.
An object, analogous to a deterministic condenser, but only guaranteeing certain entropy rate (as opposed to min-entropy) will be called \e{entropy condenser}.

The following statement must be folklore.

\fct[f_Cond_h_min]{Let $n\ge m\in\NN$ and $l\in[0,n]$.
No non-trivial deterministic condenser from $\01^n$ to $\01^m$ exists for the family of all distributions whose min-entropy is at least $l$.
}
In Theorem~\ref{no-sv-condenser} below we will prove the non-existence of non-trivial condenser even for a restricted class of distributions whose min-entropy rate is at least some bound $l$, namely for the distributions of $SV_\delta$ sources (where $l=\frac 2{1+\delta}$).

\ndefi[def_det_extr]{Deterministic extractors}{Let $\eps\in[0,1]$ and $\Cl F$ be a family of distributions over $\01^n$.
A function $h:\01^n\to\01^m$ is a \e{deterministic \f\eps-extractor for $\Cl F$} if the distribution of $h(X)$ is \f\eps-close to $\U[\01^m]$ when $X\sim\mu$ for any $\mu\in\Cl F$.
}

It is well-known (and follows from \fctref{f_Cond_h_min}) that no non-trivial deterministic extractor exists for the family of distributions whose min-entropy is at least $l$.
It was shown in~\cite{RVW04_A_No} that no non-trivial deterministic extractor exists for the family of strong \SV-sources even for $m=1$, which implies that no such extractors exist for any $m$. 
On the other hand, there are known constructions of \e{seeded extractors} for the family of high-min-entropy distributions.

\ndefi[def_seed_extr]{Seeded extractors}{Let $\eps\in[0,1]$ and $\Cl F$ be a family of distributions over $\01^n$.

A function $h:\01^n\times[D]\to\01^m$ is a \e{seeded \f\eps-extractor for $\Cl F$} if the distribution of $h(X,S)$ is \f\eps-close to $\U[\01^m]$ when $S\sim\U[{[D]}]$ and $X\sim\mu$ for any $\mu\in\Cl F$.
}

Of the following two versions that strengthen the notion of seeded extractors, the first is standard and the second is, to the best of our knowledge, new.

\ndefi[def_extr_strong]{Strong and very strong seeded extractors}{Let $\eps\in[0,1]$, $\Cl F$ be a family of distributions over $\01^n$ and $h:\01^n\times[D]\to\01^m$.

For $s\in[D]$ let $\nu_s^\mu$ be the distribution of $h(X,s)$ when $X\sim\mu$ -- that is, the distribution of $h(X,S)$, conditioned on $[S=s]$.
For $y_1\dc y_{s-1}\in\01^m$ let $\nu_{s,y_1\dc y_{s-1}}^\mu$ be the distribution of $h(X,S)$ when $X\sim\mu$, conditioned on $[S=s, h(X,1)=y_1\dc h(X,s-1)=y_{s-1} ]$ (let it be undefined if the conditioning is inconsistent).

We call $h$ a \e{strong \f\eps-extractor for $\Cl F$} if
\m{
\E[{T\sim\U[{[D]}]}]
{\norm[1]{\nu_T^\mu-\U[\01^m]}} \le 2\eps
.}

We call $h$ a \e{very strong \f\eps-extractor for $\Cl F$} if
\m{
\E[{T\sim\U[{[D]}],\, Z\sim\mu}]
{\norm[1]{\nu_{T,h(Z,1)\dc h(Z,T-1)}^\mu-\U[\01^m]}} \le 2\eps
.}
}

In other words, a seeded extractor is \e{strong} if its output $h(X,S)$ remains unpredictable even when the seed value $S$ is known, and it is \e{very strong} if the output remains unpredictable even when the seed value $S$, as well as the outputs corresponding to the preceding seed values ($h(X,1)\dc h(X,S-1)$) are known.

The concept of very strong extractors somewhat resembles the original notion of \SV-sources:\ if we write down the sequence $(h(X,1)\dc h(X,D))$, then \e{most} (not necessarily all, as would be the case for \SV-sources) of the blocks will be sufficiently unpredictable, even conditioned on the previous blocks.
Very strong extractors will be compared to deterministic condensers in the concluding part of this work (Sect.~\ref{s_conc}).

\section{No condensers for SV sources}

It \cite{SV86_Gen} Santha and Vazirani proved that in general it is not possible to extract a bit from $SV_\delta$ sources that is biased less than $\delta$.
Later it was shown~\cite{RVW04_A_No} that this is also true for strong \SVd-sources. This implies that there is no extractor that would produce a distribution that is guaranteed to have the statistical distance from uniform less than $\delta$. In other words, there are no $\epsilon$-extractors for strong $SV_\delta$ sources with $\epsilon<\delta$.

In this section we prove another impossibility result for \SV-sources (we will see in \sref{s_str_con} that the same is not true for strong \SVd-sources).

\begin{theorem}\label{no-sv-condenser}
There is no non-trivial min-entropy condenser for \SVd-sources for any $0<\delta<1$.
Namely,
\begin{enumerate}
\item the min-entropy-rate of any \SVd-source is at least $\log\frac 2{1+\delta}$,
and
\item for every $F:\{0,1\}^n\to\{0,1\}^m$, there exists an \SVd-source $X$, such that the min-entropy-rate of $F(X)$ is at most $\log\frac 2{1+\delta}$.
\end{enumerate}
\end{theorem}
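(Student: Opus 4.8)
The plan is to prove the two parts separately, both by direct arguments about the structure of $SV_\delta$ distributions. For part~1, I would show that for \emph{every} $SV_\delta$-source $\mu$ and every $x\in\01^n$ we have $\mu(x)\le\l(\fr{1+\delta}2\r)^n$. This follows by writing $\mu(x)=\prod_{i=1}^n\PRr[\mu]{X_i=x_i}{X_1=x_1\dc X_{i-1}=x_{i-1}}$ along the chain rule, and bounding each conditional probability by $\fr{1+\delta}2$ using the defining inequality of \defiref{def_SV} (if some prefix has probability $0$ the product is $0$ and the bound is trivial). Taking logarithms gives $\hm\mu\ge n\log\fr2{1+\delta}$, hence min-entropy rate at least $\log\fr2{1+\delta}$; this is exactly the infimum appearing in the non-triviality condition of \defiref{def_cond}, since the bound is attained by the maximally-biased source (all conditionals equal to $\fr{1+\delta}2$), which is itself an $SV_\delta$-source.

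For part~2, given $F:\01^n\to\01^m$, I need to construct a \emph{single} $SV_\delta$-source $X$ for which $\hm{F(X)}\le m\log\fr2{1+\delta}$, i.e.\ some value $y\in\01^m$ has $\PR{F(X)=y}\ge\l(\fr{1+\delta}2\r)^m$. The natural idea is: the maximally-biased source $\mu^*$ concentrates mass $\l(\fr{1+\delta}2\r)^n$ on a specific point; more flexibly, for any fixed target string $w\in\01^n$ the source that, at each step, outputs $w_i$ with probability $\fr{1+\delta}2$ and $1-w_i$ with probability $\fr{1-\delta}2$ is an $SV_\delta$-source $\mu_w$ (indeed a strong one), and $\mu_w(x)=\l(\fr{1+\delta}2\r)^{n-d}\l(\fr{1-\delta}2\r)^{d}$ where $d$ is the Hamming distance from $x$ to $w$. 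Under $\mu_w$, the probability of a set $S\subseteq\01^n$ is governed by how close $S$ is to $w$ in Hamming distance. So the task reduces to: find $w\in\01^n$ and $y\in\01^m$ such that the fiber $F^{-1}(y)$ is ``Hamming-heavy'' near $w$, to the extent that $\mu_w(F^{-1}(y))\ge(\fr{1+\delta}2)^m$.

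The cleanest route avoids optimizing over $w$: take $w$ \emph{random}, $w\sim\U[\01^n]$, and consider the random source $\mu_w$. I would compute $\E[w]{\PR[X\sim\mu_w]{F(X)=y}}$ for the most likely output $y$ under uniform input — or, more robustly, lower-bound $\E[w]{\max_y\PR[X\sim\mu_w]{F(X)=y}}$. A convexity/averaging argument should show that for some choice of $w$ and some $y$, $\PR[X\sim\mu_w]{F(X)=y}$ is at least what the maximally-biased-source intuition predicts. An alternative, possibly slicker, approach: note that $\sum_{x}\mu^*(x)=(\fr{1+\delta}2)^n\cdot 2^n$ cannot exceed $1$... that's false, so instead use that $\mu^*$ puts mass exactly $(\fr{1+\delta}2)^n$ on one point $x^*$; then under $\mu^*$, $F(X)=F(x^*)$ with probability $\ge(\fr{1+\delta}2)^n$, but this only gives rate $\ge\fr nm\log\fr2{1+\delta}$, which is weaker than needed when $m<n$. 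So the averaging over $w$ (or a more careful distributional argument, perhaps passing through the $m$ ``most informative'' coordinates of $F$) seems genuinely necessary.

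The main obstacle I anticipate is part~2: pushing the min-entropy of $F(X)$ \emph{down} to the bound $m\log\fr2{1+\delta}$ rather than merely to something like $\fr nm\log\fr2{1+\delta}$. The key insight I expect the authors to use is that one should not fix the biased source in advance but rather choose it \emph{adaptively as a function of $F$} — perhaps even letting the conditional biases of $X_i$ depend on $F$ in a way that steers the output distribution toward a single heavy value $y$. Concretely, I would try to build $X$ so that the induced distribution on $\01^m$ is itself (close to) a maximally-biased $SV_\delta$-type distribution on $m$ bits, by choosing, greedily and one coordinate of the output at a time, the biases of the relevant input coordinates; the fact that $SV_\delta$ allows the conditional bias to be chosen freely within $[\fr{1-\delta}2,\fr{1+\delta}2]$ at each step is exactly the freedom that makes this possible.
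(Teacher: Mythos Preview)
Your Part~1 is correct and identical to the paper's argument.

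For Part~2 there is a genuine gap. The averaging-over-$w$ route you sketch does not reach the required bound: for any fixed $y$, $\E[w]{\mu_w(F^{-1}(y))}=\sum_{x\in F^{-1}(y)}\E[w]{\mu_w(x)}=|F^{-1}(y)|\cdot 2^{-n}$, since by symmetry $\E[w]{\mu_w(x)}=2^{-n}$ for every $x$. Picking the largest fiber this only guarantees some $w$ with $\mu_w(F^{-1}(y))\ge 2^{-m}$, whereas you need $\l(\fr{1+\delta}2\r)^m$, which is strictly larger. So averaging over product sources is too weak, and your instinct that adaptivity is required is right.

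However, the adaptivity should be on the \emph{input} bits, not on the output coordinates. The paper's argument is: first, by pigeonhole, fix a single value $s\in\01^m$ with $|F^{-1}(s)|\ge 2^{n-m}$; then prove the set-concentration lemma that for every nonempty $A\subseteq\01^n$ there is an $SV_\delta$ source $\mu$ with $\mu(A)\ge\l(\fr{1+\delta}2\r)^{n-\log|A|}$, and apply it with $A=F^{-1}(s)$. The source is built greedily one \emph{input} bit at a time: having revealed a prefix $u$, set $\PRr{X_i=0}{u}=q\deq\fr{1+\delta}2$ if the $0$-branch contains at least as many points of $A$ as the $1$-branch, and $p\deq\fr{1-\delta}2$ otherwise. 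The substance of the proof is the inductive step, which reduces to the analytic inequality
\m{
q\cdot q^{\,n-\log a}+p\cdot q^{\,n-\log b}\ \ge\ q^{\,n+1-\log(a+b)}\qquad(a\ge b\ge 1),
}
established by a short calculus argument. Your ``one output coordinate at a time'' framing does not map onto this: there is no sequential structure on the bits of $F(X)$ compatible with the sequential freedom the $SV_\delta$ model gives you on the bits of $X$, so steering toward a prescribed output string bit-by-bit is not well-posed in general. The right object to steer toward is the \emph{set} $F^{-1}(s)$, and the right measure of progress is its size, which is exactly what the lemma tracks.
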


\prfstart[\theoref{no-sv-condenser}]
1. Let $X$ be an \SVd-source. Then by definition,
\m{
\PRr{X_i=a_i}{X_1=a_1,\dots,X_{i-1}=a_{i-1}} \leq \fr{1+\delta}2
}
for every $i$ and $a_1\dc a_{i-1},a_i\in\01$.
Hence
\m{
\PR{X=a} = \prod_{i=1}^n\PRr{X_i=a_i}{X_1=a_1,\dots,X_{i-1}=a_{i-1}}
\le \l(\fr{1+\delta}2\r)^n
}
for every $a\in\01^n$, which proves that the min-entropy rate of $X$ is $\geq\frac 2{1+\delta}$.

2. To prove the second claim, we need the following lemma.

\begin{lemma}\label{l_A}
For every $\delta\in(0,1)$ and any non-empty $A\subseteq\{0,1\}^n$, there exists an \SVd-distribution $\mu$, such that 
\m[e1]{
\mu(A) \ge \l(\fr{1+\delta}2\r)^{n-\log|A|}
.}
\end{lemma}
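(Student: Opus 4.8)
The plan is to build the distribution $\mu$ greedily, bit by bit, always steering the conditional probabilities toward $A$ as aggressively as the \SVd-constraint permits. Concretely, I would define $\mu$ by specifying, for every prefix $z\in\01^{i-1}$, the conditional probability $p(z)\deq\PRr[\mu]{X_i=1}{X_1=z_1\dc X_{i-1}=z_{i-1}}$, and I want to choose each $p(z)\in[\fr{1-\delta}2,\fr{1+\delta}2]$ so as to maximize $\mu(A)$. For a fixed prefix $z$ of length $i-1$, let $A_z^0$ (resp. $A_z^1$) be the set of $x\in A$ that are consistent with $z$ extended by $0$ (resp. by $1$). Then the contribution of the subtree below $z$ to $\mu(A)$ is proportional to $(1-p(z))\cdot(\text{value of }A_z^0)+p(z)\cdot(\text{value of }A_z^1)$, which is linear in $p(z)$; hence the optimum is attained at an endpoint of the interval, namely at $\fr{1+\delta}2$ toward whichever child contains the larger share of $A$ (ties broken arbitrarily). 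So the natural candidate is the greedy rule: at each node push bias $\fr{1+\delta}2$ toward the child with more mass of $A$.

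The cleanest way to make this rigorous is induction on $n$. The base case $n=0$ is trivial ($A=\{0,1\}^0=\{\emptyset\}$, $\mu(A)=1$). For the inductive step, split $A=A_0\cupdot A_1$ according to the first bit, where $A_b=\set{x\in\01^{n-1}}{b\circ x\in A}$; say without loss of generality $|A_1|\ge|A_0|$, so $|A_1|\ge|A|/2$. Set $p(\emptyset)=\fr{1+\delta}2$ (bias toward $1$) if $A_0\ne\emptyset$, and $p(\emptyset)=\fr{1+\delta}2$ as well when $A_0=\emptyset$ — in fact whenever one side is empty we are free to put all the remaining mass on the nonempty side, and $\fr{1+\delta}2\le 1$ is consistent with that only in the limit, so I would actually handle the empty-child case separately by noting that if $A_0=\emptyset$ then $A\subseteq 1\circ\01^{n-1}$ and $\log|A|=\log|A_1|$, reducing directly to the inductive hypothesis on $A_1$ with no loss. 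When both children are nonempty, apply the inductive hypothesis to get an \SVd-distribution $\mu_b$ on $\01^{n-1}$ with $\mu_b(A_b)\ge(\fr{1+\delta}2)^{(n-1)-\log|A_b|}$, and assemble $\mu$ from $p(\emptyset)$ together with $\mu_0,\mu_1$ as the conditional distributions of the last $n-1$ bits given the first. Then
\m{
\mu(A)
= \fr{1-\delta}2\tm\mu_0(A_0) + \fr{1+\delta}2\tm\mu_1(A_1)
\ge \fr{1+\delta}2\tm\mu_1(A_1)
\ge \fr{1+\delta}2\tm\l(\fr{1+\delta}2\r)^{(n-1)-\log|A_1|}
.}
Since $|A_1|\ge|A|/2$, we have $\log|A_1|\ge\log|A|-1$, hence $(n-1)-\log|A_1|\le n-\log|A|$, and because $\fr{1+\delta}2<1$ raising it to a smaller exponent only increases the value; this yields $\mu(A)\ge(\fr{1+\delta}2)^{n-\log|A|}$, completing the induction.

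The one genuine subtlety — the step I expect to be the main obstacle — is the bookkeeping around empty children and the fact that the bound is stated with $\log|A|$ rather than $\lceil\log|A|\rceil$; the inequality $\log|A_1|\ge\log|A|-1$ is exactly what absorbs the "rounding", and it is tight only when $|A_0|=|A_1|$, so one must be careful that the greedy choice (bias toward the \emph{larger} child) is what makes the argument go through rather than an arbitrary choice. Everything else — checking that the assembled $\mu$ is a genuine \SVd-distribution (its conditionals on bits $2\dc n$ are just the conditionals of $\mu_0$ or $\mu_1$, which lie in the right interval by the inductive hypothesis, and the conditional on bit $1$ is $p(\emptyset)\in[\fr{1-\delta}2,\fr{1+\delta}2]$ by construction) — is routine. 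Finally, to close the proof of part 2 of the theorem, apply the lemma with $A=F^{-1}(y)$ for the $y\in\01^m$ maximizing $|F^{-1}(y)|$, so $|A|\ge 2^{n-m}$; the resulting \SVd-source $X$ has $\PR{F(X)=y}=\mu(A)\ge(\fr{1+\delta}2)^{n-\log|A|}\ge(\fr{1+\delta}2)^m$, so $\hm{F(X)}\le m\Log{\fr2{1+\delta}}$, i.e. the min-entropy rate of $F(X)$ is at most $\Log{\fr2{1+\delta}}$, matching the lower bound from part 1.
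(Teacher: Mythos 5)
Your high-level plan is the same as the paper's: define $\mu$ greedily by pushing bias $\frac{1+\delta}{2}$ toward the child of the current prefix that contains the larger share of $A$, and prove the bound by induction on $n$. Your treatment of the base case, the empty-child case, and the way you close part~2 from the lemma are all fine. The inductive step, however, contains a genuine error. You discard the $A_0$ term and conclude
\[
\mu(A)\ \ge\ \frac{1+\delta}{2}\cdot\mu_1(A_1)\ \ge\ \frac{1+\delta}{2}\cdot\left(\frac{1+\delta}{2}\right)^{(n-1)-\log|A_1|}\ =\ \left(\frac{1+\delta}{2}\right)^{n-\log|A_1|}.
\]
You then compare the \emph{inner} exponent $(n-1)-\log|A_1|$ with $n-\log|A|$ and, having checked $(n-1)-\log|A_1|\le n-\log|A|$, declare the bound proved -- but in that comparison you silently dropped the outside factor of $\frac{1+\delta}{2}$, which raises the exponent by one and exactly cancels the slack you gained from $\log|A_1|\ge\log|A|-1$. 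What you actually obtained is $\mu(A)\ge\left(\frac{1+\delta}{2}\right)^{n-\log|A_1|}$, and since $|A_1|\le|A|$ and $0<\frac{1+\delta}{2}<1$, this is \emph{at most} the target $\left(\frac{1+\delta}{2}\right)^{n-\log|A|}$ rather than at least it. Already for $n=1$, $A=\{0,1\}$ your chain gives $\mu(A)\ge\frac{1+\delta}{2}<1$ while the target is $1$; the failure is not a corner case, it occurs whenever both children are nonempty.

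The paper avoids this by keeping \emph{both} terms. Writing $q=\frac{1+\delta}{2}$, $p=\frac{1-\delta}{2}$, $a=|A_0|$, $b=|A_1|$ with $a\ge b$, the inductive step is the inequality
\[
q\cdot q^{n-\log a}+p\cdot q^{n-\log b}\ \ge\ q^{\,n+1-\log(a+b)},
\]
equivalently $q\,a^{-\log q}+p\,b^{-\log q}\ge q\,(a+b)^{-\log q}$, which it proves by fixing $a$ and studying the function $f(x)=q\,a^{-\log q}+p\,x^{-\log q}-q\,(a+x)^{-\log q}$ on $[0,a]$: both endpoints vanish, $f'(a)<0$, and $f'$ has a unique zero, so $f\ge 0$ throughout. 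Note that the inequality is an \emph{equality} at $b=a$ (the case $|A_0|=|A_1|$ you flag as the tight one), so the contribution $p\cdot q^{n-\log b}$ cannot be discarded there, nor anywhere else, without losing the bound. To repair your proof you would need to reinstate the $\frac{1-\delta}{2}\cdot\mu_0(A_0)$ term and establish this two-term inequality or an equivalent one.
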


First we prove 2., assuming that the lemma is valid.

Let $s$ be such that $|F^{-1}(s)|\geq 2^{n-m}$.
Then \lemref{l_A} implies that for some \SVd-source $X$ it holds that
\m{
\PR{F(X)=s} \ge \l(\frac{1+\delta}2\r)^{n-(n-m)}
= \l(\frac{1+\delta}2\r)^m
.}
Hence
\m{
H_\infty[F(X)]\leq m\cdot\log\frac 2{1+\delta}
,}
as required.
\prfend

Now let us prove the lemma.

\prfstart[\lemref{l_A}]
Denote by $p=\frac{1-\delta}2$, $q=\frac{1+\delta}2$. 
The idea is simple: put as much weight as possible to $A$. So we define an $SD_\delta$ source $X$ by

\smallskip
\[\begin{array}{lll}
\PRr{X_i=0}{(X_1,\dots, X_{i-1})=u}&= q&\mbox{ if }
|\{v\ |\ u0v\in A\}|\geq |\{v\ |\ u1v\in A\}|\\
\\
&=p&\mbox{otherwise.}
\end{array}\]
We will prove (\ref{e1}) by induction on $n$. The base case is clear.

If all $a\in A$ start with 0 or all start with 1, the induction step is trivial. So suppose that this is not the case. Let $A_0$, respectively $A_1$, be those that start with~0, respectively with~1. Assume without loss of generality that $|A_0|\geq|A_1|$. We define $X$ so that $\PR{X_1=0}=q$ and for the conditional probabilities we will use the two sources that maximize the probabilities that $0v\in A_0$ and $1v\in A_1$. If we denote by $a=|A_0|$ and $b=|A_1|$, then for the induction step, it suffices to prove the following inequality
\begin{equation}\label{e2}
q\cdot q^{n-\log a}+p\cdot q^{n-\log b}\geq q^{n+1-\log(a+b)}.
\end{equation}
This is equivalent to
\begin{equation}\label{e3}
q\cdot a^{-\log q}+ p\cdot b^{-\log q}\geq q\cdot(a+b)^{-\log q}.
\end{equation}

Let $a$ be fixed and consider the function 
\[
f(x):= q\cdot a^{-\log q}+ p\cdot x^{-\log q}-q\cdot(a+x)^{-\log q}
\]
in the domain $0\leq x\leq a$. We need to prove that $f(x)$ is non-negative in this domain. To this end, it suffices to prove:
\begin{enumerate}
\item $f(0)=0$, which is immediate,
\item $f(a)=(q+p)a^{-\log q}-q(2a)^{-\log q}=a^{-\log q}-q2^{-\log q}a^{-\log q}=0$,
\item $f'(0)>0$ or $f'(a)<0$,
\item $f'(x)$ has a unique root.
\end{enumerate}

Concerning 3., both are true, but it suffices to prove one of these inequalities. We will check the second one.
\[\begin{array}{rl}
f'(a)=&p(-\log q)a^{-\log q-1}-q(-\log q)(2a)^{-\log q-1}\\ \\
=&(-\log q)a^{-\log q-1}(p-q\cdot2^{-\log q-1})\\ \\
=&(-\log q)a^{-\log q-1}(p-\frac 12)<0,
\end{array}\]
because $-\log q>0$, as well as $a^{-\log q-1}>0$, and $p<\frac 12$.

Concerning 4., 
\[\begin{array}{ll}
f'(x)=0&\Leftrightarrow\\ \\
p(-\log q)x^{-\log q-1}=q(-\log q)(a+x)^{-\log q-1}&\Leftrightarrow\\ \\
p^{\frac 1{-\log q-1}}x=q^{\frac 1{-\log q-1}}(a+x)&\Leftrightarrow\\ \\
(p^{\frac 1{-\log q-1}}-q^{\frac 1{-\log q-1}})x=q^{\frac 1{-\log q-1}}a.
\end{array}\]
Since $p\neq q$, the coefficient at $x$ is non-zero and consequently the equation has a unique solution. This finishes the proof of the inequality~(\ref{e2}) which was needed for the induction step.
\prfend

\sect[s_str_con]{A condenser for strong SV sources}

In this section we will show that in contrast to the standard SV sources, non-trivial deterministic condensers do exist for \emph{strong} SV sources.

\subsection{The construction}\label{s_constr}

In this part we construct a family of functions, which will be shown to act as deterministic condensers for strong Santha-Vazirani sources in the next part.

The following definition is, essentially, due to~\cite{H50_Err}.

\ndefi[def_Ham]{Hamming code}{Let $d\in\NN$ and $M_d\in\01^{d\times(2^d-1)}$ be the matrix whose columns are the numbers $1\dc2^d-1$ in their binary representation.
Then the set
\m{
\Hamd \deq \sett{x\in\01^{2^d-1}}{M_d\tm x=\bar0}
}
is the \e{Hamming code} of length $2^d-1$.
}

The Hamming codes are known to have minimum distance $3$, which is easy to see:
\m[m_Hdist_3]{
x_1\neq x_2 \in \Hamd
~\Then~
M_d \tm (x_1\+ x_2) = \bar0
~\Then~
\sz{x_1\+ x_2} \ge 3
,}
as the columns of $M_d$ are linearly independent in $\GF[d]2$.

For $i\in\ZZ_{2^d}$, let
\m[m_Hamdi]{
\Hamdi \deq \sett{x\+e_i}{x\in\Hamd}
.}
From \bref{m_Hdist_3} it follows that these $2^d$ sets are pairwise disjoint.
As $\sz{\Hamd}=2^{2^d-1-d}=2^{2^d-1}/2^d$, 
\m[m_part]{
\01^{2^d-1} = \bigcupdot_{i=0}^{2^d-1} \Hamdi
.}
This is the well-known fact that Hamming codes are \emph{perfect}.
Let
\m[m_g_d]{
g_d:\: \01^{2^d-1}\to\01^d
}
point to the equivalence class of its argument:\ namely, $g_d(x)$ is the \e{binary representation} of such $i$ that $x\in\Hamdi$; from~\bref{m_part} it follows that $g_d$ is well-defined.

Let $n=k\tm(2^d-1)$ be a multiple of $2^d-1$, define $f_d:\01^n\to\01^{\fr d{2^d-1}\tm n}$ as follows:
\m[m_f_d]{
f_d(y_1\dc y_k) \deq g_d(y_1)\dcirc g_d(y_k)
,}
where every $y_i$ is a block of length $2^d-1$.

\subsection{Analysis}\label{s_analy}

Following \cite{RVW04_A_No}, we will call a distribution $\nu$ \emph{$\delta$-imbalanced} if for every $x,y$, 
\[
\PR{X=x} \le \frac{1+\delta}{1-\delta}\tm\PR{Y=y}.
\]

\begin{lemma}
If $X$ is a strong $SV_\delta$ source, then the distribution of $g_d(X)$ is $\delta$ imbalanced.
\end{lemma}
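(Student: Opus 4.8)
The plan is to exhibit, for each $i \in \ZZ_{2^d}$, a lower and upper bound on $\PR{g_d(X) = i} = \PR{X \in \Hamdi}$ and then divide. Since the $2^d$ classes $\Hamdi$ partition $\01^{2^d-1}$ into sets of equal size $2^{2^d-1-d}$, the natural strategy is to compare $\PR{X \in \Hamdi}$ with the ``reference'' probability one would assign a set of this size. The key structural fact is that $X$ is a \emph{strong} $SV_\delta$ source, so the conditional bias bound holds with \emph{all other $n-1$ coordinates fixed}, not just the preceding ones; this lets me control how much the weight of $X$ can change when I flip a single coordinate.

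**The core step** is a single-bit flip comparison: for any string $z \in \01^{2^d-1}$ and any coordinate $j$, strong-$SV_\delta$-ness gives
\[
\frac{1-\delta}{2} \;\le\; \PRr{X_j = z_j}{X_{[n]\setminus j} = z_{[n]\setminus j}} \;\le\; \frac{1+\delta}{2},
\]
and the same for $X_j = \bar z_j$, hence $\PR{X = z} / \PR{X = z \oplus e_j} \le \frac{1+\delta}{1-\delta}$. Now fix a Hamming codeword $x \in \Hamd$ (so $x \in \Hamd^0$) and index $i$ with binary representation the column $M_d$-syndrome, so that $x \oplus e_i \in \Hamdi$. Flipping one bit moves between the two classes and changes the point mass by a factor at most $\frac{1+\delta}{1-\delta}$. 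Summing over $x \in \Hamd$ gives a bijection between $\Hamd^0$ and $\Hamdi$ (via $x \mapsto x \oplus e_i$) under which each pair of point masses differs by a factor at most $\frac{1+\delta}{1-\delta}$; therefore
\[
\PR{X \in \Hamdi} \;\le\; \frac{1+\delta}{1-\delta}\tm \PR{X \in \Hamd^0}
\quad\text{and}\quad
\PR{X \in \Hamd^0} \;\le\; \frac{1+\delta}{1-\delta}\tm \PR{X \in \Hamdi}.
\]
Chaining these two inequalities — going through the common class $\Hamd^0$ — yields $\PR{g_d(X) = i} \le \bigl(\frac{1+\delta}{1-\delta}\bigr)^2 \tm \PR{g_d(X) = i'}$ for any $i, i'$, which is $\delta'$-imbalance for $\delta'$ with $\frac{1+\delta'}{1-\delta'} = \bigl(\frac{1+\delta}{1-\delta}\bigr)^2$.

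**The subtlety** I should check is whether the lemma really asks for the factor $\frac{1+\delta}{1-\delta}$ (not its square). If so, a single flip is not enough — but $\Hamdi$ and $\Hamd^i$ for $i$ differing in one bit position of the \emph{index} are not related by a single coordinate flip of $X$. The cleaner route that achieves factor exactly $\frac{1+\delta}{1-\delta}$: observe that $\Hamd^i = \Hamd^0 \oplus e_i$ where $e_i$ is a \emph{single} unit vector $e_i \in \GF[2^d-1]2$ (one coordinate flip!), because $i$ ranges over $\ZZ_{2^d}$ and $g_d(x)$ is the syndrome, whose value $i$ corresponds to the unit vector $e_i$ with that syndrome. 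So the map $x \mapsto x \oplus e_i$ is a single-coordinate flip sending $\Hamd^0$ to $\Hamd^i$, giving $\PR{X \in \Hamd^i}$ within a factor $\frac{1+\delta}{1-\delta}$ of $\PR{X \in \Hamd^0}$ directly. But comparing $\Hamd^i$ to $\Hamd^j$ for two \emph{nonzero} indices still routes through $\Hamd^0$ and costs the square — unless one observes $\Hamd^i$ and $\Hamd^j$ with $\sz{e_i \oplus e_j}$ small are also close. I expect the intended statement tolerates the factor coming from the construction; I would state and prove the clean version ($x \mapsto x \oplus e_i$ is one flip, so every class is within $\frac{1+\delta}{1-\delta}$ of the all-zero class, hence any two classes within $\bigl(\frac{1+\delta}{1-\delta}\bigr)^2$) and note this suffices, or match the paper's normalization of ``$\delta$-imbalanced'' if it already folds in the square.

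**The main obstacle** is purely bookkeeping: making the bijection $\Hamd^0 \leftrightarrow \Hamdi$ explicit and verifying that it is realized by a single coordinate flip (so that exactly one application of the strong-$SV_\delta$ inequality applies per point), together with getting the correct constant in the conclusion. There is no analytic difficulty — everything reduces to the elementary observation that a strong $SV_\delta$ source has bounded point-mass ratios under single-bit flips, combined with the fact that Hamming cosets $\Hamdi$ are exactly the translates of $\Hamd$ by the $2^d$ vectors $e_0, \dots, e_{2^d-1}$.
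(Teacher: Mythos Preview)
Your approach has the right skeleton --- the single-bit-flip inequality $\mu(x\oplus e_k)/\mu(x)\le\frac{1+\delta}{1-\delta}$ for a strong $SV_\delta$ source, summed over a coset, is exactly the mechanism --- but you stop one observation short and end up with the squared factor, which does \emph{not} prove $\delta$-imbalance as stated.

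The missing idea is this: for \emph{any} two indices $i\neq j$ (including both nonzero), the cosets $\Hamd^i$ and $\Hamd^j$ differ by a \emph{single} coordinate flip. Indeed, the parity-check matrix $M_d$ has as columns all nonzero vectors of $\GF2^{\,d}$, so for distinct nonzero $i,j$ the sum of columns $i$ and $j$ equals some column $k$; hence $M_d(e_i\oplus e_j\oplus e_k)=0$, i.e.\ $e_i\oplus e_j\oplus e_k\in\Hamd$, and therefore $\Hamd^i=\Hamd^j\oplus e_k$. (If one of $i,j$ is $0$, the same holds with $k$ the other index.) This is precisely the minimum-distance-$3$/perfectness property of the Hamming code. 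With this in hand, the bijection $x\mapsto x\oplus e_k$ carries $\Hamd^j$ onto $\Hamd^i$ via one flip, and your pointwise inequality gives
\[
\mu\bigl(g_d^{-1}(u_1)\bigr)=\sum_{x\in\Hamd^i}\mu(x)\le\frac{1+\delta}{1-\delta}\sum_{x\in\Hamd^i}\mu(x\oplus e_k)=\frac{1+\delta}{1-\delta}\,\mu\bigl(g_d^{-1}(u_2)\bigr),
\]
which is exactly $\delta$-imbalance.

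So the fix is a one-line structural fact about Hamming codes that you gestured at (``unless one observes $\Hamd^i$ and $\Hamd^j$ \ldots\ are also close'') but did not pin down. Routing through $\Hamd^0$ is unnecessary and loses a factor; the paper's proof compares arbitrary cosets directly.
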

\begin{proof}
As $X\sim\mu$ is a strong \SVd-source (see Def.~\ref{def_SV}), for all $x\in\01^{2^d-1}$ and $i\in[2^d-1]$,
\smallskip
\begin{equation}\label{m_hyp_edge}
\fr{\mu(x\+e_i)}{\mu(x)} \le \fr{1+\delta}{1-\delta}.
\end{equation}

Since the Hamming code is a perfect code of distance 3, for every $i\neq j$, there exists $k$ such that $e_i+e_j+e_k\in\Hamd$. Hence
\begin{equation}\label{m_eio}
\Hamd^i=\Hamd^j+e_k.
\end{equation}
Now we can write for $u_1\in\Hamd^i,u_2\in\Hamd^j,$ 
\m[m_colours]{
\mu(g_d^{-1}(u_1))
& = \sum_{x\in\Hamd^i}\mu(x)
\leq \sum_{x\in\Hamd^i}\fr{1+\delta}{1-\delta} \tm\mu(x \+ e_k)\\
& \le \fr{1+\delta}{1-\delta} \tm \sum_{x\in\Hamd^j}\mu(x)
= \fr{1+\delta}{1-\delta} \tm \mu(g_d^{-1}(u_2))
,}
where the inequality is~\bref{m_hyp_edge}. Thus $g_d(X)$ is $\delta$-imbalanced. 
\end{proof}

\begin{lemma}
Let $\nu$ be $\delta$-imbalanced distribution on $\{0,1\}^d$. Then 
\begin{equation}\label{el3}
\hm{\nu} \ge d - \Log{\fr{1+\delta}{1-\delta}}.
\end{equation}
\end{lemma}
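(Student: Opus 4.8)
The plan is to bound the maximum probability of $\nu$ directly, using the $\delta$-imbalanced condition together with the fact that the probabilities sum to $1$. Let $p_{\max}=\max_{a\in\01^d}\nu(a)$ be attained at some point $a^\ast$. By the $\delta$-imbalanced hypothesis applied with $x=a^\ast$, every other point $b$ satisfies $\nu(a^\ast)\le\fr{1+\delta}{1-\delta}\tm\nu(b)$, equivalently $\nu(b)\ge\fr{1-\delta}{1+\delta}\tm p_{\max}$. Summing over all $2^d$ points of $\01^d$ gives $1=\sum_b\nu(b)\ge 2^d\tm\fr{1-\delta}{1+\delta}\tm p_{\max}$, hence $p_{\max}\le\fr{1+\delta}{1-\delta}\tm 2^{-d}$.

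Taking logarithms then yields
\m{
\hm\nu = \Log{\fr1{p_{\max}}} \ge \Log{2^d\tm\fr{1-\delta}{1+\delta}}
= d - \Log{\fr{1+\delta}{1-\delta}}
,}
which is exactly \bref{el3}. That completes the argument.

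There is essentially no obstacle here; the only point requiring a moment's care is that the $\delta$-imbalanced condition as stated quantifies over \emph{all} pairs $x,y$ (using two different random variables $X,Y$ in the displayed inequality, but the intended meaning is clearly that $\nu(x)\le\fr{1+\delta}{1-\delta}\tm\nu(y)$ for all $x,y$ in the support), so in particular it applies with the numerator at the maximizing point and the denominator ranging over everything. One should also note that $\nu$ need not be supported on all of $\01^d$: if $\nu(b)=0$ for some $b$, then the imbalanced inequality with $y=b$ forces $\nu(x)=0$ for all $x$, contradicting that $\nu$ is a distribution; so in fact $\nu$ has full support and the summation over all $2^d$ points is legitimate. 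With that remark the proof is a two-line computation.
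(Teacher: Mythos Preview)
Your proof is correct and follows essentially the same approach as the paper's: the paper picks some $u_0$ with $\nu(u_0)\le 2^{-d}$ (by averaging) and then applies the $\delta$-imbalanced inequality to bound every $\nu(u)$ by $\fr{1+\delta}{1-\delta}\tm 2^{-d}$, while you equivalently lower-bound every $\nu(b)$ by $\fr{1-\delta}{1+\delta}\tm p_{\max}$ and sum. These are the same two-line computation viewed from opposite ends.
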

\begin{proof}
As $\sum_{u\in\01^d} \nu(u) = 1$, 
there exists some $u_0\in\01^d$, such that $\nu(u_0) \le 2^{-d}$.
Since  $\nu$ is $\delta$-imbalanced, for all $u\in\01^d$,
\[
\nu(u) \le \fr{1+\delta}{1-\delta} \tm \nu(u_0)
\le \fr{1+\delta}{1-\delta} \tm 2^{-d},
\]
which proves (\ref{el3}).
\end{proof}

From the two, lemmas we get
\m[m_hm_gd]{
\hm[X\sim\mu]{g_d(X)} \ge d - \Log{\fr{1+\delta}{1-\delta}}
.}

Now let $n=k\tm(2^d-1)$ for $k\in\NN$ and $Y=(Y_1\dc Y_k)\in\01^n$ be sampled according to a strong \SVd-distribution $\nu$ (every $Y_i$ consists of $2^d-1$ bits).
We want to analyze the resulting distribution of $f_d(Y)\in\01^{k\tm d}$.

Fix any $i\in[k]$ and $y_1\dc y_i\in\01^{2^d-1}$.
By the definition of (strong) \SV-distributions (Def.~\ref{def_SV}), the distribution of $Y_i$ remains strong \SVd\ when conditioned upon $[Y_1=y_1\dc Y_{i-1}=y_{i-1}]$; accordingly, from~\bref{m_hm_gd} it follows that
\m{
\PRr[Y\sim\nu]{Y_i=y_i}{Y_1=y_1\dc Y_{i-1}=y_{i-1}}
\le \fr{1+\delta}{1-\delta} \tm 2^{-d}
.}
Via the trivial induction this implies
\m[m_hm_fd]{
\hm[Y\sim\nu]{f_d(Y)} \ge kd - k\tm\Log{\fr{1+\delta}{1-\delta}}
.}

Therefore, $f_d$ is a deterministic condenser for the family of strong \SVd-distributions, whose min-entropy rate is lower-bounded by
\m{
1 - \fr1d \tm \Log{\fr{1+\delta}{1-\delta}}
.}
Since the min-entropy of a strong \SVd-distribution over $\01^n$ can be as low as $n\tm\Log{\fr2{1+\delta}}$ (as witnessed by the mutually independent distribution of $n$ bits, each taking value ``$1$'' with probability $\fr{1+\delta}2$), the min-entropy rate of a strong \SVd-distribution over $\01^n$ can be as low as $\Log{\fr2{1+\delta}}$ and therefore, $f_d$ is a \e{non-trivial deterministic condenser for strong \SVd-distributions} as long as $\delta\in(0,1)$.

We have established the following (via an explicit construction).

\theo[th_det_cond]{Let $d\in\NN$ and $n$ be a multiple of $2^d-1$.
A deterministic condenser for strong \SV-distributions exists that maps $n$ bits to $\fr{n\tm d}{2^d-1}$ bits and when the input distribution is a strong \SVd\ for $\delta\in[0,1)$ (whose min-entropy rate can be as low as $\Log{\fr2{1+\delta}}$), the generated min-entropy rate is at least
\m{
1 - \fr1d \tm \Log{\fr{1+\delta}{1-\delta}}
.}

In particular, for any constants $\eps,\delta\in(0,1)$, a deterministic condenser for strong \SVd-distributions exists that maps $n$ bits to \asOm{n} bits of min-entropy rate at least $1-\eps$.
}

\sect[s_conc]{Deterministic condensers vs.\ very strong extractors}

As \e{deterministic condensers} are known not to exist for the family of high-min-entropy distributions (\fctref{f_Cond_h_min}), they are not considered in the literature very often.
We conclude this work by discussing these elegant and natural objects.

\para{Very strong extractors from deterministic condensers.}
Recall the notion of \e{entropy rate} (as opposed to \e{min}-entropy rate) of a condenser, given by~\bref{m_en_ra}.

\clm[cl_cond_extr]{
Let $\eps\in[0,1]$, $m<n$, $D$ be such that $D|m$ and $\fr{\ln 2}2\eps\le\fr Dm$. Let $\Cl F$ be a family of distributions over $\01^n$ and let $h:\01^n\to\01^m$ be a deterministic entropy condenser for $\Cl F$  of entropy rate at least $1-\eps$. Then $g:\01^n\times[D]\to\01^{\dr mD}$, defined as
\medskip
\[
g(x,s) = h(x)\upharpoonright_{\fr{(s-1)m}D+1,\dots,\fr{sm}D},
\]
i.e., $h(x)$ restricted to bits $\fr{(s-1)m}D+1,\dots,\fr{sm}D$, is a very strong $\delta$-extractor for $\Cl F$, where
$\delta=\sq{\fr{\ln 2}2\eps\tm\fr mD}$.

In particular, for $D=m$ this gives $g:\01^n\times[m]\to\01$, which is a very strong $\sq{\fr{\ln 2}2\eps}$-extractor for $\Cl F$.
}

As the entropy rate is always at least as high as the min-entropy rate, it follows from the above statement that the construction of \sref{s_constr}, as summarized by \theoref{th_det_cond}, also gives \e{very strong extractors for the family of strong Santha-Vazirani distributions}, namely:

\crl{
For any constants $\eps,\delta\in(0,1)$, a very strong single-bit $\eps$-extractor of seed length $\log n-\asO1$ exists for strong \SVd-distributions.}

\prfstart[\clmref{cl_cond_extr}]
Let $\mu\in\Cl F$.
Similarly to \defiref{def_extr_strong}, for every $s\in[D]$ and $y_1\dc y_{s-1}\in\01^{\dr mD}$ let $\nu_{s,y_1\dc y_{s-1}}^\mu$ be the distribution of $g(X,S)$ when $X\sim\mu$, conditioned on $[S=s, g(X,1)=y_1\dc g(X,s-1)=y_{s-1} ]$ (let it be undefined if the conditioning is inconsistent).

Then
\m[m_d2e_1]{
&\E[\mac{Z\sim\mu;\\T\sim\U[{[D]}]}]
{\h{\nu_{T,g(Z,1)\dc g(Z,T-1)}^\mu}}\\
&\tbbb=\E[{T\sim\U[{[D]}]}]
{\hh[X\sim\mu]{g(X,T)}{g(X,1)\dc g(X,T-1)}}\\
&\tbbb=\fr1D\tm
\sum_{t=1}^{D}
\hh[X\sim\mu]{g(X,t)}{g(X,1)\dc g(X,t-1)}\\
&\tbbb=\fr1D\tm\h[X\sim\mu]{h(X)}
\ge \fr1D\tm(1-\eps)\tm m = (1-\eps)\tm\fr mD
.}
Accordingly,
\m{
&\E[\mac{Z\sim\mu;\\T\sim\U[{[D]}]}]
{\norm[1]{\nu_{T,h(Z,1)\dc h(Z,T-1)}^\mu-\U[\01^{\dr mD}]}^2}\\
&\tbbb\le 2\ln2 \tm \E[\mac{Z\sim\mu;\\T\sim\U[{[D]}]}]
{\fr mD-\h{\nu_{T,g(Z,1)\dc g(Z,T-1)}^\mu}}\\
&\tbbb\le 2\ln2 \tm \eps \tm \fr mD
,}
where the first inequality is \fctref{c_mu_U} and the second one is~\bref{m_d2e_1}.
The result follows from the concavity of square root.
\prfend

\para{Deterministic \e{entropy} condensers from very strong extractors.}
It is not hard to see that a very strong extractor is not necessarily a good deterministic min-entropy condenser:\ while we require from very strong extractors to behave nearly-uniformly only \e{on average}, bounding the \e{min-entropy} of the condenser's output requires that the probability of a most likely (i.e., \e{worst-case}) output value is not too high.

On the other hand, we have seen that deterministic \e{entropy} condensers are very strong extractors (\clmref{cl_cond_extr}).
It turns out that the connection between \e{very strong extractors} and \e{deterministic entropy} (as opposed to min-entropy) \e{condensers} is two-way:

\clm[cl_extr_cond]{Let $\delta\in[0,1]$, $\Cl F$ be a family of distributions over $\01^n$ and $h:\01^n\times[D]\to\01^m$ be a very strong \f\delta-extractor for $\Cl F$.
Then
\medskip
\[
\h[X\sim\mu]{h(X,1)\dc h(X,D)}
\ge D\tm m - D\tm m\tm\delta - D\tm\sq{4\log e\tm m \tm\delta}
\]
for every $\mu\in\Cl F$.
That is, the \e{entropy rate} of $\l(h(X,s)\r)_{s=1}^{D}$ is at least
$1-\delta - \sq{4\log e\tm\fr{\delta}m}$.
}

\prfstart
Similarly to \defiref{def_extr_strong}, for every $s\in[D]$ and $y_1\dc y_{s-1}\in\01^m$ let $\nu_{s,y_1\dc y_{s-1}}^\mu$ be the distribution of $h(X,S)$ when $X\sim\mu$, conditioned on $[S=s, h(X,1)=y_1\dc h(X,s-1)=y_{s-1} ]$ (let it be undefined if the conditioning is inconsistent).

Then
\m[P]{
\h[X\sim\mu]{h(X,1)\dc h(X,D)}\hspace{-64pt}&\\
&= \sum_{t=1}^{D}
\hh[X\sim\mu]{h(X,t)}{h(X,1)\dc h(X,t-1)}\\
&= D\tm\E[{T\sim\U[{[D]}]}]
{\hh[X\sim\mu]{h(X,T)}{h(X,1)\dc h(X,T-1)}}\\
&= D\tm\E[\mac{Z\sim\mu;\\T\sim\U[{[D]}]}]
{\h{\nu_{T,h(Z,1)\dc h(Z,T-1)}^\mu}}\\
&= Dm - D\tm\E[\mac{Z\sim\mu;\\T\sim\U[{[D]}]}]
{m-\h{\nu_{T,h(Z,1)\dc h(Z,T-1)}^\mu}}  \malabel{m_e2d_1}
.}
By \clmref{c_mu_U},
\m[P]{
&\E[\mac{Z\sim\mu;\\T\sim\U[{[D]}]}]
{m-\h{\nu_{T,h(Z,1)\dc h(Z,T-1)}^\mu}}\\
&\tbb\le \fr m2 \tm\E[\mac{Z\sim\mu;\\T\sim\U[{[D]}]}]
{\norm[1]{\nu_{T,h(Z,1)\dc h(Z,T-1)}^\mu - \U[\01^m]}}\\
&\tbbb + \sq{2\log e\tm m} \tm\E[\mac{Z\sim\mu;\\T\sim\U[{[D]}]}]
{\sq{\norm[1]{\nu_{T,h(Z,1)\dc h(Z,T-1)}^\mu - \U[\01^m]}}}\\
&\tbb\le \fr m2 \tm\E[\mac{Z\sim\mu;\\T\sim\U[{[D]}]}]
{\norm[1]{\nu_{T,h(Z,1)\dc h(Z,T-1)}^\mu - \U[\01^m]}}\\
&\tbbb + \sq{2\log e\tm m \tm\E[\mac{Z\sim\mu;\\T\sim\U[{[D]}]}]
{\norm[1]{\nu_{T,h(Z,1)\dc h(Z,T-1)}^\mu - \U[\01^m]}}}
,}
where the latter inequality follows from the concavity of square root.
As $h$ is a very strong \f\delta-extractor,
\m{
\E[\mac{Z\sim\mu;\\T\sim\U[{[D]}]}]
{\norm[1]{\nu_{T,h(Z,1)\dc h(Z,T-1)}^\mu - \U[\01^m]}} \le 2\delta
}
and
\m{
\E[\mac{Z\sim\mu;\\T\sim\U[{[D]}]}]
{m-\h{\nu_{T,h(Z,1)\dc h(Z,T-1)}^\mu}}
\le m\tm\delta + \sq{4\log e\tm m \tm\delta}
.}

From~\bref{m_e2d_1},
\m{
\h[X\sim\mu]{h(X,1)\dc h(X,D)}
\ge D\tm m - D\tm m\tm\delta - D\tm\sq{4\log e\tm m \tm\delta}
,}
and the result follows.
\prfend

\para{Conclusion.}
We have established the following.

\theo[th_det_cond_v_s_extr]{Let $\Cl F$ be a family of distributions over $\01^n$.

If  $h:\01^n\to\01^m$ is a deterministic entropy condenser for $\Cl F$ of entropy rate at least $1-\eps$, where $\eps\le\fr{2D}{\ln 2\tm m}$ for some $D|m$, then $g_h:\01^n\times[D]\to\01^{\dr mD}$, defined as
\m{
g_h(x,s) = h(x)\upharpoonright_{\fr{(s-1)m}D+1,\dots,\fr{sm}D},
}
is a very strong $\sq{\fr{\ln 2}2\eps\tm\fr mD}$-extractor for $\Cl F$.

If $\delta\in[0,1]$ and $g:\01^n\times[D]\to\01^m$ is a very strong \f\delta-extractor for $\Cl F$, then $h_g:\01^n\to\01^{D\tm m}$, defined as
\m{
h_g(x) = h(X,1)\dcirc h(X,D)
}
is a deterministic entropy condenser for $\Cl F$ of entropy rate at least $1-\delta - \sq{4\log e\tm\fr{\delta}m}$.
}

\section{Conclusions}

We conclude our article with an open problem.
We have shown that there is an essential difference between \SV-sources and strong \SV-sources: for the former, there are no non-trivial condensers, while for the latter we have constructed them.
But this only concerns \emph{min-entropy,} so we pose as an open problem the following.

\begin{problem}
Does there exist non-trivial \emph{entropy} condensers for {\SVd} sources? If so, how much can one condense the entropy of these sources?
\end{problem}

\phantomsection
\addcontentsline{toc}{section}{Acknowledgements}

\sect*{Acknowledgements}

We are grateful to anonymous reviewers for a number of very useful suggestions.

\phantomsection
\addcontentsline{toc}{section}{References}

\end{document}